\begin{document}

\theoremstyle{plain}
\newtheorem{lemma}{Lemma}
\newtheorem{theorem}{Theorem}
\newtheorem{prop}{Proposition}
\newtheorem{remark}{Remark}
\newtheorem{cor}{Corollary}
\newtheorem{subgroup}{Subgroup}
\newtheorem{definition}{Definition}

\title{Capacity and Energy-Efficiency of Delayed Access Scheme for Small Cell Networks}

\author{\IEEEauthorblockN{
 Haluk Celebi\IEEEauthorrefmark{1} and
 \.{I}smail G\"{u}ven\c{c}\IEEEauthorrefmark{2}
 and Henning Schulzrinne\IEEEauthorrefmark{3}
 }\\
\IEEEauthorblockA{\IEEEauthorrefmark{1}Department of Electrical Engineering, Columbia University, New York, NY\\
\IEEEauthorrefmark{2}Department of Electrical and Computer Engineering, North Carolina State University, Raleigh, NC\\
\IEEEauthorrefmark{3}Department of Computer Science, Columbia University, New York, NY\\
\IEEEauthorblockA{Email: {\tt haluk@ee.columbia.edu}, {\tt iguvenc@ncsu.edu}, {\tt hgs@cs.columbia.edu  }
}
}}

\maketitle

\begin{abstract}
Data applications may typically tolerate a moderate delay before packet transmission between user equipment (UE) and cell begins. This delay can be taken advantage to reduce the communication distance, improve coverage probability, and increase overall energy-efficiency of the small cell network. To demonstrate such merits, we suggest a simple access scheme and analyze the distribution of coverage probability and throughput as a function of delay and transmit distance. Sufficient
number of small base stations (SBSs) handle the peak traffic load. To improve energy-efficiency of the network, a number of SBSs are switched off at low traffic periods. Energy-efficiency can be further improved by turning all of the SBSs on and off, rather than selecting a subset and leaving them off. By doing so, coverage probability and bit-rate can be improved by delaying their transmissions and waiting for a closer SBS to become available. Results show that by turning SBSs on and off continuously and taking advantage of initial delay to connect a SBS yield an order of magnitude improvement in energy-efficiency, improves the coverage probability significantly at low signal to interference and noise (SINR) regime.

\end{abstract}

\begin{IEEEkeywords}
5G, delay tolerant network (DTN), delayed access, energy efficiency, latency, small cells, URLLC
\end{IEEEkeywords}

\IEEEpeerreviewmaketitle

\section{Introduction}

 Today, due to rising traffic demand, cellular networks are on the crossroads of a major shift in terms of their structure, topology, and operation. These major changes compel new areas of research for capacity enhancement and energy optimization. The key performance objective of 5G and beyond is to increase capacity by several orders of magnitude in terms of aggregate throughput \cite{5Gwhitepaper} in ten years. This objective will be realized  mainly by densely deploying small cells, utilizing new spectrum in micro and mili-meter wave (mmWave) bands, and adopting multi-antenna and multi-point transmission schemes. At this point the concern arises whether energy-per-bit will remain affordable with the deployment of small cells.  Basic solution to the energy concern is to turn off small cells at low traffic periods.

 Simple and self organizing access strategies are vital to cope with the increasing complexity and to realize highly flexible connectivity in 5G. Current cellular networks will be 50$\times$ densified with the deployment of small cells. The densification will increase the operational complexity about 60 times \cite{Magazine:5GChallengesSON}. Therefore, self-management, adaptability, and decentralized access strategies become inevitable. Besides, forecasts indicate that machine type devices such as sensors, actuator nodes, tablets, drones, and electronic meters will outnumber  smart phones. Each of these devices may have different service demand, and hence, 5G networks will have to brace for flexible and data-oriented access schemes. For many applications, data communications can tolerate a modest delay before data transmission begins, which helps designing flexible access schemes. By delaying users' access to the cell, it is possible to decrease communication distance even if majority of cells are turned off for saving energy. It then becomes of interest to study the associated trade-offs between the network energy efficiency and the access latency to the network, which to our best knowledge has not been explored in the existing literature.

There is a rich literature on the analysis of SINR distribution in random networks using stochastic geometry. In \cite{IEEEjor:TractableJeff}, tractable expressions of coverage probability and bit-rate of a user in a random location are given. In \cite{StochGeoOnOff}, control signals are muted to improve SINR and save power in an adaptive manner. In \cite{smallCellSleepStrategies}, bit-rate and energy efficiency (EE) analysis of random cellular networks are given. Then, EE maximization algorithms are offered. Besides,  obtaining the trade-off between access delay and bit-rate may be of importance for emerging ultra-reliable and low-latency communications (URLLC) \cite{URLLCBennisLetter,DBLP_FedLearnURLLV2V}.
The novelty of our work over such existing literature lies in several aspects: $i)$ we analyze energy-efficiency improvement by delaying user equipment's (UE) access to a small cell, and $ii)$ it is shown how UE's  delayed access strategy affect SINR distribution. Access delay is the time between UE's service request and the transmission of data from SBS. Advantage of access delay becomes prominent in small cell networks operating under energy-saving policy since access delay gives UE the opportunity to receive service from a close located cell in an energy-saving mode instead of connecting a distant SBS \cite{HalukRandomOnOff,halukLoadBasedOnOff}.

Contributions of this paper are: $1)$ Expression of coverage probability and transmit rate of a user are derived as a function of transmit range and delay tolerance; $2)$ For the delayed access scheme, it is shown that there exists optimal threshold distance maximizing coverage probability for target SINR; $3)$ An efficient numerical algorithm is developed that optimizes bit-rate by computing optimal threshold distance for given tolerable delay; $4)$ It is shown that considerable improvement in energy-efficiency of small cell network can be achieved by the delayed access scheme. Besides the theoretical work, it is believed that analyzing the access delay vs SINR distribution gives insight into the design of protocols for delay tolerant applications.

The rest of paper arranged as follows: Section II of this paper introduces the system model and definitions of the parameters used throughout the paper. In Section III, we  analyze the distributions of coverage probability  of UE as a function of predefined range and delay. In Section IV, we derive average achievable bit-rate in UE's access scheme. Section V derives optimal transmit range that maximizes the coverage probability, and suggest a numerical algorithm to maximize average bit-rate for given delay budget. In Section VI, we briefly discuss  energy-efficiency improvement by comparing a UE's connection with and without access delay. Section VII provides numerical results that validate our analysis and Section VIII concludes the paper.

\section{Network Model}

In this section, the downlink model of the small cell network is described.
Small base stations are distributed randomly according to homogeneous Poisson point
process (PPP) $\Phi$ of density $\rho_{\rm f}$ in the 2-dimensional plane.
Similarly, distribution of the locations of user equipment (UE) is also PPP
with some intensity.

In our model, SBS has three operation modes, namely, \textit{active},
\textit{idle} and \textit{sleep} modes. At any given time fraction of sleep,
idle and active  cells are $p_{\rm S}$, $p_{\rm I}$, and $p_{\rm A}$, respectively.
In order to preserve ergodicity of the network, we assume these probabilities
do not change. In the active mode, an SBS gives data service to one UE with exponential
service time with mean $1/\mu$. Similarly, duration of the sleep mode is also
exponential with $\lambda_{\rm S}$. Once sleep period expires, sleeping cell is replaced
 with an idle cell. By doing idle-active and idle-sleep exchange, we obtain distribution of active, idle and sleeping cells \cite{IEEEjor:TractableJeff}.
 We assume SBS is \textit{available} if it can serve a service request immediately,
 and \textit{unavailable} if it is in sleep mode, or actively serving another UE.

 After SBS completes a service with all UEs, it moves to the idle mode, and idle cell that is randomly chosen switches to active mode. We assume that there is a central controller that manages such rotation of operation modes among the small cells. In practice, it is clear that the random distribution of sleep cell is not the best choice. Given delay tolerance level, switch-on and switch off decisions can be made by intelligent scheme. Design of such scheme is not the scope of this study; however, it is believed that improvements achieved by delayed access scheme in randomly operated network highlights the minimum possible achievements in energy efficiency.
 
  Wireless channel is modeled by standard propagation model
 with path loss exponent $\alpha > 2$ and Rayleigh fading with unit mean.
 In typical case, the received power at a receiver is $P_{\rm tx} \, g_{i} r^{-\alpha}$,
 where $P_{\rm tx}(i)$ is transmission power of SC $i$, $r_{i}$ is the distance between
 UE and SBS $i$, and $g_{i}$ is the small-scale fading gain following exponential distribution with mean $1/\zeta$.
 Interference power at a receiver is the sum of the received powers from
 all active base stations excluding the SBS that UE is associated with. Then, signal
 to interference and noise ratio is
 \begin{align}
 \text{SINR} = \frac{P_{\rm tx}hr^{-\alpha} }
 {\sum_{i \in \Phi_{\rm A}  \text{\textbackslash\{$b_{\rm o}$\} } }P_{\rm tx}g_{i}r^{-\alpha} + \sigma^2 },
 \label{eq_SINR}
 \end{align}
 where $\sigma^2$ is the noise power of the additive noise, $\Phi_{\rm A}$ denotes HPPP
 for the active cells, and $h$ is the small-scale fading gain associated with the channel between the user of interest and its serving SBS. We assume all active SBSs have constant transmit power,  and interference from idle cells is negligible.

 \section{Analysis of Coverage Probability}

 In this section, we analyze the coverage probability of downlink
 transmission with delayed access in small cell networks. Analysis of coverage probability
 is important because it is highly critical for operators to sustain,
 with high probability,  SINR levels  above a threshold value that
 is acceptable for signal quality. Coverage probability defined as;
 \begin{align}
 p_{\rm c} = \mathbb{P}(\text{SINR}>\gamma ),
 \end{align}
 where $\gamma$ is the threshold SINR. By conditioning on the distance
 between UE and small cell, $p_{\rm c}$ can be written as
 \begin{align}
  p_{\rm c} =
  \int_{0}^{\infty} \mathbb{P}(\text{SINR}>\gamma |r) f_{r,w}(r)\text{d}r,
  \label{eq_pr_cov_generic}
 \end{align}
where $f_{r,w}(r)$ is the probability density function of the distance to nearest idle cell and $w$ is the tolerable delay. Let $t$ be the access time of user to the cell. Clearly we have $t \in [0, w]$.
\subsection{Distance distribution to the nearest available cell}
 Distance to the nearest cell can be evaluated by three independent events: immediate access within threshold distance (IA), delayed access within threshold distance (DA), access outside threshold distance (AO). Let the distance to the nearest base station is $r$. Then, the tail distribution of $r$ in 2-D Poisson process can be written as:
\begin{align}
\mathbb{P}[R < r] = P [ \text{ BS closer than $r$ } ] = 1 - e^{-\pi\rho_{\rm f} r^2},
\label{distCDF_R}
\end{align}
and pdf of $r$ is
\begin{align}
f_{r}(r) = 2\pi \rho_{\rm f} r^2 e^{-\pi\rho_{\rm f} r^2}.
\label{distPDF_R}
\end{align}
In case of IA, number of idle cells Poisson distributed with mean $N_{\rm I} = p_{\rm I}\rho_f \pi R_{\rm th}^2$ for given $R_{\rm th}$. Conditioning on number of idle cells outside an inner circle with radius $r$ within transmission range, tail distribution of distance to the nearest idle cell can be written as
\begin{align}
\mathbb{P}(R > r  | R_{\rm th}, \text{IA})  = & \frac{\sum_{i=1}^{\infty}\left[1 - \frac{r^2}{R_{\rm th}^2} \right]^{i} }{1 - e^{-N_{\rm I}}} \frac{e^{-N_{\rm I}}N_{\rm I}^{i} }{i!}
\label{distCDF_IA}
\\ \nonumber
 = & \frac{1}{1 - e^{-N_{\rm I}} } \left(e^{-N_{\rm I}\frac{r^2}{R_{\rm th}^2} } - e^{-N_{\rm I}}\right).
\end{align}
Distance distribution is independent of the density of small cells in case of DA event. Due to random distribution of sleep and active cells, and the memoriless property of exponential distribution, any cell within $R_{\rm}$ may become available irrespective of  length of waiting time. Then, distance distribution to the cell is
\begin{align}
  \mathbb{P}(R > r |~R_{\rm th}, \text{DA} ) = 1- \frac{r^2}{R_{\rm th}^2}.
\label{distCDF_DA}
\end{align}

In case of OA, CDF of distance to the nearest cell is
\begin{align}
    \mathbb{P}(R > r |R_{\rm th}, \text{OA} ) = \frac{e^{-\rho_{\rm f}\pi (r^2 - R_{\rm th}^2)}}{e^{-N_{\rm I}}}.
\label{distCDF_OA}
\end{align}

After taking the derivative of (\ref{distCDF_IA})-(\ref{distCDF_OA}) with respect to $r$, pdf of distance can be formed as:

\begin{equation}
  f_{R_{\rm th}}(r) = \begin{cases}
     \frac{e^{-p_{\rm I}\rho_{\rm f}\pi r^2 }p_{\rm I}\rho_{\rm f} 2\pi r}{1-e^{p_{\rm I}\rho_{\rm f}\pi R_{\rm th}^2} } & : \text{IA},  (t = 0) \\
    \frac{2r}{R_{\rm th}^2} & : \text{DA},(0<t < w) \\
    \frac{e^{-p_{\rm I}\rho_{\rm f}\pi r^2 }p_{\rm I}\rho_{\rm f} 2\pi r}{e^{p_{\rm I}\rho_{\rm f}\pi R_{\rm th}^2}} &: \text{OA},(t= w)
  \end{cases}~. \label{eq_pdf_distance}
\end{equation}

The probability distribution of distance to nearest cell is not only a function of idle cell density but also the waiting time. Taking into consideration of tolerable delay and by the UE's access time $t$ to the small cell, the respective probabilities of IA, DA, AO events can be written as
\begin{align}
\label{eq_pr_state_transition_IA}
\mathbb{P}\left( \text{IA} \right) & = 1 - e^{-p_{\rm I}\rho_{\rm f}\pi R_{\rm th}^2}  \\
\label{eq_pr_state_transition_DA}
\mathbb{P}\left( \text{DA} \right) & = e^{-p_{\rm I}\rho_{\rm f}\pi R_{\rm th}^2} - e^{-\beta_{\rm w}\rho_{\rm f}\pi R_{\rm th}^2}  \\
\label{eq_pr_state_transition_OA}
\mathbb{P}\left( \text{OA} \right) & = e^{-\beta_{\rm w}\rho_{\rm f}\pi R_{\rm th}^2},
\end{align}
where $\beta_{w}$ is the probability that a cell is either available or will become available within tolerable delay time $w$ (i.e., $\beta_{0} = p_{\rm I}$), which can be easily derived as
\begin{align}
\beta_{w} = 1 - p_{\rm A}e^{-\mu w} - p_{\rm S}e^{-\lambda_{\rm S} w}.
\end{align}

\subsection{Distribution of coverage probability}
Coverage probability can be found by conditioning on distance. Using piece-wise density functions in (\ref{eq_pdf_distance}), and (\ref{eq_pr_state_transition_IA})-(\ref{eq_pr_state_transition_OA}),  we can re-write (\ref{eq_pr_cov_generic}) as
\begin{align}
\mathbb{P}(\text{SINR}&>\gamma )  =  \nonumber\\
&\int_{0}^{R_{\rm th}} \mathbb{P}(\text{SINR}>\gamma |r) f_{R_{\rm th}| \text{IA}}(r,t) \text{d}r 
\nonumber
\times \mathbb{P}\left( \text{IA} \right)\\
+& \int_{0}^{R_{\rm th}} \mathbb{P}(\text{SINR}>\gamma |r) f_{R_{\rm th}| \text{DA}}(r,t) \text{d}r
\times \mathbb{P}\left( \text{DA} \right)
\nonumber \\
+& \int_{R_{\rm th}}^{\infty} \mathbb{P}(\text{SINR}>\gamma |r) f_{R_{\rm th}| \text{OA}}(r,t) \text{d}r
 \times \mathbb{P}\left( \text{OA} \right),
\label{eq_pr_SINR_pwise}
\end{align}

The coverage probability conditioned on the distance as in (\ref{eq_pr_cov_generic}) can then be derived as follows:
\begin{align}
\mathbb{P}(\text{SINR} > \gamma | r)& = \mathbb{P}
 \left\{ \frac{ P_{\rm tx}hr^{-\alpha} }{P_{\rm tx} I_{\rm \Phi_{\rm A}} +\sigma^2 } >\gamma \right\}
 \nonumber\\
 & = \mathbb{E}_{\rm I_{\Phi_{\rm A}}} \left[ \mathbb{P} \left\{
                            h> \frac{\gamma r^\alpha}{P_{\rm tx}}\left( P_{\rm tx}I_{\Phi_{\rm A}}+\sigma^2 \right)  \right\}          \right]
 \nonumber \\
& =  \mathbb{E}_{\rm I_{\Phi_{\rm A}}} \left[ \text{exp}
   \left( -\frac{\zeta \gamma r^\alpha}{P_{\rm tx}}\left( P_{\rm tx}I_{\Phi_{\rm A}}+\sigma^2 \right) \right) \right]
 \nonumber \\
& = e^{-\frac{\zeta \gamma r^\alpha \sigma^2}{P_{\rm tx}}} \mathcal{L}_{\rm I_{\Phi_{\rm A}}}(\zeta \gamma r^\alpha),
\label{eq_pr_cov_dist_cond}
\end{align}
where $\mathcal{L}_{\rm I_{\Phi_{\rm A}}}(s)$ is the Laplace transform of $I_{\Phi_{\rm A}}$ conditioned upon the transmit distance $r$. For $\mathcal{L}_{\rm I_{\Phi_{\rm A}}}(s)$, we have
\begin{align}
\mathcal{L}_{\rm I_{\Phi_{\rm A}}}(s) & =  \mathbb{E}_{\Phi_{\rm A}}\left[ e^{-s\sum_{i \in \Phi_{\rm A}}g_{i}r_{i}^{-\alpha} } \right]
\nonumber \\
& = \mathbb{E}_{\Phi_{\rm A}}\left[  \prod_{i}   e^{-s g_{i}r_{i}^{-\alpha} }   \right]
\nonumber \\
& = \text{exp}(-p_{\rm A}2\pi\rho_{\rm f}\int_{0}^{\infty}\left( 1- \mathbb{E}_{g_{i}}\left[ -s g_{i}z_{i}^{-\alpha} \right]  \right) )z\text{d}z \nonumber \\
& = \text{exp}\left( -p_{\rm A}\pi \rho_{\rm f} r^2 \gamma^{\frac{2}{\alpha}} \int_{0}^{\infty}\frac{\text{d}z}{1+z^{\frac{\alpha}{2}}} \right).
\label{eq_Ltransform_1}
\end{align}
Third equality is due to $\mathbb{E}[ \prod_{x \in \Phi}f(x)]=e^{-\int_{\mathbb{R}^2}(1-f(x)\rho \text{d}x}$  \cite{chiu2013stochastic}. In the last equality of (\ref{eq_Ltransform_1}), we plug $s = \zeta \gamma r^{\alpha}$. As a special case for $\alpha = 4$, one can easily find,
\begin{align}
\mathcal{L}_{\rm I_{\Phi_{\rm A}}}(\zeta \gamma r^4) = e^{-p_{\rm A}\rho_{\rm f}r^2 \sqrt\gamma \pi^2/2}.
\label{eq_Ltransform_SPcase}
\end{align}
We can further derive closed-form solutions with and without white noise.

\subsection{Special cases}
\begin{theorem} For $\alpha=4, \sigma^2 = 0$, coverage probability is
\begin{align}
p_{\rm c} & = \frac{\beta_{0}}{\theta + \beta_{0}}
\nonumber \\
&+ \left( e^{-\beta_{0}v} - e^{-\beta_{\rm w}v} \right)\left[ \frac{1}{\theta v} -e^{-\theta v}\left\{ \frac{\beta_{0}}{\theta + \beta_{0}} + \frac{1}{\theta v}  \right\} \right],
\label{eq_pr_cov_SPalpha4N0}
\end{align}
\end{theorem}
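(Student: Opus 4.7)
The plan is to unpack the compact integral expression in (\ref{eq_pr_SINR_pwise}) into three pieces, evaluate each in closed form using the special-case Laplace transform (\ref{eq_Ltransform_SPcase}), and then re-assemble the sum in the form claimed by the theorem. I read $\theta = p_{\rm A}\pi\sqrt{\gamma}/2$ and $v = \pi\rho_{\rm f} R_{\rm th}^2$, so that the noiseless conditional coverage probability (\ref{eq_pr_cov_dist_cond}) becomes $\mathbb{P}(\text{SINR}>\gamma\mid r) = \exp(-\theta\rho_{\rm f} r^{2})$, the IA/OA normalising constants are $1-e^{-\beta_{0}v}$ and $e^{-\beta_{0}v}$, and $\mathbb{P}(\text{OA}) = e^{-\beta_{\rm w} v}$. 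With these identifications every integrand is just an exponential in $r^{2}$.

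My first concrete step is to change variable $u = \pi\rho_{\rm f} r^{2}$ in all three integrals of (\ref{eq_pr_SINR_pwise}); this turns the IA and OA integrands into $\propto e^{-(\theta+\beta_{0})u}$ (over $[0,v]$ and $[v,\infty)$ respectively) and the DA integrand into $e^{-\theta u}/v$ over $[0,v]$. Each integral then evaluates to an elementary expression, and multiplying by the corresponding event probability from (\ref{eq_pr_state_transition_IA})--(\ref{eq_pr_state_transition_OA}) produces
\begin{align*}
\text{IA term} &= \frac{\beta_{0}}{\theta+\beta_{0}}\bigl(1-e^{-(\theta+\beta_{0})v}\bigr),\\
\text{DA term} &= \bigl(e^{-\beta_{0}v}-e^{-\beta_{\rm w}v}\bigr)\,\frac{1-e^{-\theta v}}{\theta v},\\
\text{OA term} &= \frac{\beta_{0}}{\theta+\beta_{0}}\,e^{-\theta v}\,e^{-\beta_{\rm w}v},
\end{align*}
where the cancellation $-(\theta+\beta_{0})v + \beta_{0}v = -\theta v$ in the OA term is the key simplification that keeps the answer compact. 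The normalising constants of the IA and OA densities in (\ref{eq_pdf_distance}) are precisely the reciprocals of $\mathbb{P}(\text{IA})$ and $e^{-\beta_{0}v}$, which is why they cancel cleanly here.

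The remaining work is purely algebraic: I would collect the two $\beta_{0}/(\theta+\beta_{0})$ terms from IA and OA, noting that
$$\frac{\beta_{0}}{\theta+\beta_{0}}\bigl[1 - e^{-(\theta+\beta_{0})v} + e^{-(\theta+\beta_{\rm w})v}\bigr] = \frac{\beta_{0}}{\theta+\beta_{0}} - \frac{\beta_{0}\,e^{-\theta v}}{\theta+\beta_{0}}\bigl(e^{-\beta_{0}v}-e^{-\beta_{\rm w}v}\bigr),$$
and then factor $(e^{-\beta_{0}v}-e^{-\beta_{\rm w}v})$ out of the combination of this residue and the DA term. That factorisation immediately produces the bracket $\bigl[\tfrac{1}{\theta v} - e^{-\theta v}\{\tfrac{\beta_{0}}{\theta+\beta_{0}} + \tfrac{1}{\theta v}\}\bigr]$ in the statement.

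I expect no serious obstacle: the tail-distribution pieces in (\ref{distCDF_IA})--(\ref{distCDF_OA}) were designed exactly so that the densities' normalising constants cancel against the event probabilities, and the $u$-substitution makes $\alpha=4$ reduce to a Gaussian-free exponential integral. The only place where one can slip is signs in the IA density of (\ref{eq_pdf_distance}), where the denominator should read $1-e^{-p_{\rm I}\rho_{\rm f}\pi R_{\rm th}^{2}}$ for the density to be positive; I would silently correct this in the calculation. The algebraic bookkeeping in the last step, making sure the IA-OA residue is regrouped into the same $(e^{-\beta_{0}v}-e^{-\beta_{\rm w}v})$ factor that multiplies the DA term, is tedious but routine.
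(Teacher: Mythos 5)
Your proposal is correct and follows essentially the same route as the paper's appendix proof: substitute the noiseless conditional coverage probability $e^{-\theta\pi\rho_{\rm f}r^{2}}$ into (\ref{eq_pr_SINR_pwise}), change variables $u=\pi\rho_{\rm f}r^{2}$ to reduce the IA, DA, and OA pieces to the elementary integrals $\beta_{0}\int_{0}^{v}e^{-(\beta_{0}+\theta)u}\mathrm{d}u$, $\tfrac{1}{v}\int_{0}^{v}e^{-\theta u}\mathrm{d}u\,(e^{-\beta_{0}v}-e^{-\beta_{\rm w}v})$, and $\tfrac{\beta_{0}e^{-\beta_{\rm w}v}}{e^{-\beta_{0}v}}\int_{v}^{\infty}e^{-(\beta_{0}+\theta)u}\mathrm{d}u$, and regroup. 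Your three evaluated terms and the final factorisation match the paper exactly (the stray missing $\pi$ in your expression $\exp(-\theta\rho_{\rm f}r^{2})$ is only a typo, as your subsequent substitution uses the correct exponent), and your silent sign correction in the IA density's normalising constant is the right reading of (\ref{eq_pdf_distance}).
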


\begin{proof}
\text{see Appendix}
\label{theo_specialCaseWONoise}
\end{proof}


Similarly, interference with noise ($\alpha=4$) can be derived as in \ref{theo_specialCaseWONoise}. Plugging (\ref{eq_Ltransform_SPcase}) in (\ref{eq_pr_cov_dist_cond}), then, substituting  (\ref{eq_pr_state_transition_IA})-(\ref{eq_pr_state_transition_OA}), (\ref{eq_pr_cov_dist_cond}) into (\ref{eq_pr_SINR_pwise}), coverage probability is derived as:
\begin{align}
p_{\rm c}& = \frac{1}{\sigma} \sqrt \frac{\pi P_{\rm tx}}{\zeta \gamma} \Big( p_{\rm I}\pi \rho_{\rm f} a \left[\frac{1}{2} - q(1 - e^{-(\beta_{w} -\beta_{0})v }) \right]
\nonumber \\
&+ \frac{1}{R^2}b (\phi-\frac{1}{2})(e^{-\beta_{0}v} - e^{-\beta_{W}v} )   \Big),
\end{align}
where $a$, $b$, $q$, and $\phi$ are given by:
\begin{align}
a &     = e^{\frac{(p_{\rm A}\sqrt\gamma \pi/2-p_{\rm I})^2(\rho_{\rm f}\pi)^2P_{tx}} {4\zeta\gamma\sigma^2} },\\
b &   = e^{\frac{ (p_A\sqrt \gamma pi^2)^2P_{\rm tx}}{16\zeta\gamma\sigma^2} },\\
q &   = Q \Big(\frac{2\sigma}{\rho_{\rm f}\pi}\sqrt \frac{\zeta\gamma}{P_{\rm tx}}
\big(v+\frac{(p_{\rm A}\gamma \pi/2+ p_{\rm I})(\rho_{\rm f}\pi)^2P_{\rm tx}}{4\zeta \gamma \sigma^2} \big)  \Big),\\
\phi & = \Phi \left( \frac{2\zeta \gamma\sigma^2}{P_{\rm tx} }\left(R^2 +\frac{p_{rm A}\rho_{\rm f} \gamma\pi^2R_{\rm th}^2P_{\rm tx}}{4\zeta\gamma\sigma^2} \right)   \right).
\end{align}

\section{Average Bit Rate for Delayed Access}
In this section, we derive average achievable bit-rate as a function of threshold distance and waiting time by using the respective probabilities of IA, DA and OA (\ref{eq_pr_state_transition_IA})-(\ref{eq_pr_state_transition_OA}). To be more precise, we derive formula in terms of bits/hz expected value of log$_2$(1+SINR) under assumption that Shannon's capacity bound is achievable. Conditioning on access types depending on waiting time of DE, we have
\begin{align}
C  & = \mathbb{E}\left[\text{log$_2$(1 + SINR)} \right] = \mathbb{E}\left[\text{log$_2$(1 + SINR)}\,|\, \text{IA} \right] \mathbb{P}(\text{IA})\nonumber\\
& + \mathbb{E}\left[\text{log$_2$(1 + SINR)} \,|\, \text{DA} \right] \mathbb{P}(\text{DA})\nonumber \\
& + \mathbb{E}\left[\text{log$_2$(1 + SINR)} \,|\, \text{OA} \right] \mathbb{P}(\text{OA}).
\label{eq_bitrate_1}
\end{align}
For each access type (i.e. IA, DA, or OA),conditional expectation needs to be derived. Here, we start with  Average bitrate if UE connects to the cell immediately:
\begin{align}
\mathbb{E}\big[\text{log}_2(1 + & \text{SINR)}\,|\, \text{IA}\big] =
\nonumber \\
& \int_{0}^{R_{\rm th}}  f_{\text{IA}}(r,t)
\int_{0}^{\infty} \frac{\text{ln(1+$\gamma$)}}{\text{ln(2)}} f_{\gamma|r}(\gamma)\text{d}\gamma \text{d}r,
\label{eq_bitrate_IA}
\end{align}
where  $f_{\text{IA}}$ is the density function of distance when DE connects immediately, and $f_{\gamma|r}(\gamma)$ is the conditional density function of SINR given that distance betwen user and transmitting cell is $r$. In the inner integration in (\ref{eq_bitrate_IA}), we can write conditional pdf using the tail distribution of SINR in (\ref{eq_pr_cov_dist_cond}). Then, we have:
\begin{align}
C  & =  \frac{1}{\text{ln(2)}}
\int_{0}^{R_{\rm th}}f_{\text{IA}}(r,t) \int_{0}^{\infty} \text{ln(1+$\gamma$)} \text{d}\left( \mathbb{P}\left(\text{SINR}>\gamma | r \right) \right)\text{d}r
\nonumber\\
& =  \frac{1}{\text{ln(2)}} \int_{0}^{R_{\rm th}} f_{\text{IA}}(r,t)  \int_{0}^{\infty} \frac{\mathbb{P}\left(\text{SINR}>\gamma | r \right)}{1 + \gamma}\text{d}\gamma \text{d}r
\label{eq_bitrate_21}\\
& =  \frac{1}{\text{ln(2)}} \int_{0}^{\infty} \left( \int_{0}^{R_{\rm th}}  \frac{\mathbb{P}\left(\text{SINR}>\gamma | r \right)}{1 + \gamma} f_{\text{IA}}(r,t)\text{d}r\right)  \text{d}\gamma.
\label{eq_bitrate_2}
\end{align} The second equality of (\ref{eq_bitrate_21}) is due to change of variables in the inner integration. After changing the order of integration in third equality, the inner integration becomes equivalent to (\ref{eq_pr_SINR_pwise}). Following similar procedure for $\mathbb{E}\left[\text{log$_2$(1 + SINR )} \,|\, \text{DA} \right]$, and $\mathbb{E}\left[\text{log$_2$(1 + SINR )} \,|\, \text{OA} \right]$, and plugging them in (\ref{eq_bitrate_1}), we can obtain the capacity as:
\begin{align}
C =   \frac{1}{\text{ln(2)}} \int_{0}^{\infty} \frac{\mathbb{P}\left(\text{SINR}>\gamma  \right)} {1 + \gamma}  \text{d}\gamma~.
\end{align}

\section{Optimization of Bit Rate and Coverage Probability for Delayed Access}

In this section, we optimize the coverage probability and average bit rate with respect to threshold distance. Before delving into analytical details, it is helpful to give intuition of why optimal threshold distance exists.  Consider two extreme cases of small and large threshold distances. In the first case, it is not likely that UE finds an idle cell within its threshold distance, causing waste of tolerable delay. On the other case, user is very likely to access an idle cell without much delay, causing minimal use of delay budget. None of these cases yields optimal coverage probability with respect to waiting time. Therefore it is necessary to adjust threshold distance with respect to given delay budget.

\subsection{Optimization of Coverage Probability}
In this section, we optimize coverage probability with respect to given tolerable delay by adjusting threshold distance. For analytical tractability, we made several assumptions. First, we assumed noise power negligible. Second, we assumed cells can adjust their coverage range by increasing or decreasing their transmit power. Notice that coverage probability in (\ref{eq_pr_cov_SPalpha4N0}) is not a function of transmit power. In other words, UE's access strategy does not affect interference power.

For the analysis, we only considered a special case of $\alpha = 4$. To find threshold distance $R_{\rm th}$ that maximizes coverage probability, we first take the derivative of (\ref{eq_pr_cov_SPalpha4N0}) respect to $v$. We then obtain:
\begin{align}
\frac{\text{d} p_{\rm c}}{\text{d} v} & =\beta_{0} e^{-(\beta_{0}+\theta)v} \nonumber\\
&- \frac{1}{\theta v^2} \Big( e^{-\beta_{0}v} - e^{-\beta_{w}v}- e^{-( \beta_{0}+ \theta)v}  + e^{-(\beta_{w} +\theta )v} \Big)
\nonumber \\
& + \frac{1}{\theta v} \Big( -\beta_{0}e^{-\beta_{0}v} +\beta_{w}e^{-\beta_{w}v} + (\beta_{0} + \theta)e^{-(\beta_{0} +\theta)v }\nonumber\\
&-(\beta_{w} +\theta)e^{-(\beta_{w} +\theta)v} \Big)
+\frac{\beta_{0}(\beta_{w}+\theta)}{\beta_{0}+\theta}e^{-(\beta_{w}+\theta)v}.
\label{eq_derivative_pr_cov4}
\end{align}
After substituting second order Taylor expansion for $e^{-\beta_{0}v}$, $e^{-\beta_{w}v}$, $e^{-(\beta_{0}+\theta)v}$, $e^{-(\beta_{w}+\theta)v}$, derivation in (\ref{eq_derivative_pr_cov4}) reduces to quadratic form of $Av^2 +Bv +C$, where $A$, $B$, $C$ are
\begin{align}
A & = \beta_{0}\left( \frac{(\beta_{0}+\theta)^2}{2} + \frac{({\beta_{w} + \theta})^3}{2(\beta_{0}+\theta)} \right)
\nonumber \\
B & = \frac{3}{2} \left(\beta_{0} - \beta_{w})(\beta_{0} + \beta_{w} +\theta) \right) - \beta_{0}\left( \beta_{0} +\theta + \frac{(\beta_{w} +\theta)^2 }{\beta_{0}+\theta} \right)
\nonumber \\
C & =  2(\beta_{w} - \beta_{0} ) + \beta_{0}\left(1 + \frac{\beta_{w}+ \theta}{\beta_{0}+\theta} \right).
\label{eq_quadratik}
\end{align}
Let $v_{1}^{*}$,$v_{2}^{*}$ be the roots of the quadratic quadratic equation in (\ref{eq_quadratik}). Since $\beta_{\rm w} > \beta_{0}$, we have $A>0$, $B<0$, $C>0$, and thus $v_{1}^{*}$,$v_{2}^{*}$  are both positive. Then, candidate threshold distances $R_{\rm th_{1}}^{*}$, $R_{\rm th_{2}}^{*}$ are $\sqrt{\frac{v_{1}^{*}}{\pi \rho_{\rm f}}}$, $\sqrt{\frac{v_{2}^{*}}{\pi \rho_{\rm f}}}$. Finally, among the two threshold distances, we choose the one satisfying second derivative test. To satisfy maximum condition, the second derivative of (\ref{eq_quadratik}) necessitates $v<\frac{B}{2A}$, which is met by the smaller root.

\begin{lemma}
Optimal threshold distance that maximizes the coverage probability is a decreasing function of $\gamma$.

\end{lemma}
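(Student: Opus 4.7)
The plan is to reduce the lemma to a one-variable comparative statics question and then attack it via the implicit function theorem applied to the quadratic optimality condition (\ref{eq_quadratik}). Observe from (\ref{eq_Ltransform_SPcase}) that, under the assumptions $\alpha=4$ and $\sigma^2=0$ in force for the coverage formula (\ref{eq_pr_cov_SPalpha4N0}), the threshold $\gamma$ enters only through the combination $\theta := p_{\rm A}\pi\sqrt{\gamma}/2$, which is strictly increasing in $\gamma$. Because $R_{\rm th}^* = \sqrt{v^*/(\pi\rho_{\rm f})}$ is strictly increasing in the smaller root $v^*$ of the quadratic, it is enough to show that $v^*$ is a strictly decreasing function of $\theta$.

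Set $F(v,\theta) := A(\theta)v^2 + B(\theta)v + C(\theta)$, so that $v^*$ is implicitly defined by $F(v^*,\theta)=0$ together with the selection rule that $v^*$ is the smaller root. The second-derivative test preceding the lemma already yields $\partial_v F(v^*,\theta) = 2A(\theta)v^* + B(\theta) < 0$, hence by the implicit function theorem
\begin{equation}
\frac{dv^*}{d\theta} \;=\; -\,\frac{\partial_\theta F(v^*,\theta)}{\partial_v F(v^*,\theta)},
\end{equation}
so the sign of $dv^*/d\theta$ coincides with the sign of $\partial_\theta F(v^*,\theta)$. It remains to prove this last quantity is strictly negative.

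Differentiating the expressions in (\ref{eq_quadratik}) term by term yields $C'(\theta) = \beta_{0}(\beta_{0}-\beta_{\rm w})/(\beta_{0}+\theta)^{2} < 0$ and $A'(\theta)>0$, while $B'(\theta)$ is in general not sign-definite, so the inequality $\partial_\theta F(v^*,\theta) = (v^*)^{2} A'(\theta) + v^* B'(\theta) + C'(\theta) < 0$ cannot be established coefficient by coefficient. My plan is to use the constraint $A(v^*)^{2} = -Bv^* - C$ to eliminate the quadratic term, rewriting
\begin{equation}
\partial_\theta F(v^*,\theta) \;=\; v^*\Bigl(B'(\theta) - \tfrac{A'(\theta)}{A(\theta)}B(\theta)\Bigr) + \Bigl(C'(\theta) - \tfrac{A'(\theta)}{A(\theta)}C(\theta)\Bigr),
\end{equation}
and then verify the right-hand side is strictly negative using $\beta_{\rm w}>\beta_{0}>0$ together with the bound $v^* < -B/(2A)$ that follows from $v^*$ being the smaller root of an upward-opening parabola.

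The hard part is entirely this final sign verification: $A$, $B$, $C$ are rational functions of $\theta$ with cubic dependence on $\beta_{\rm w}+\theta$ and $\beta_{0}+\theta$, so expanding the two bracketed combinations above produces a lengthy expression whose negativity must be extracted from the single qualitative fact $\beta_{\rm w}>\beta_{0}$. I expect the cleanest route is to factor $\beta_{\rm w}-\beta_{0}$ out of the whole expression, reducing the problem to confirming nonnegativity of an explicit polynomial in $(\theta,\beta_{0},\beta_{\rm w},v^*)$ whose remaining coefficients are manifestly nonnegative under the bound on $v^*$.
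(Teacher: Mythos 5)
Your reduction is sound as far as it goes: passing from $\gamma$ to $\theta=p_{\rm A}\pi\sqrt{\gamma}/2$, from $R_{\rm th}^*$ to $v^*$, and invoking the implicit function theorem with $\partial_v F(v^*,\theta)=2Av^*+B<0$ at the smaller root correctly reduces the lemma to the single inequality $\partial_\theta F(v^*,\theta)<0$. But that inequality is precisely where the proof has to happen, and you never establish it — you only describe a strategy you ``expect'' to work. As written, this is a proof outline with the decisive step missing, so it does not yet prove the lemma. Worse, one of the auxiliary facts you do assert is false in general: from (\ref{eq_quadratik}),
\begin{equation}
A'(\theta)=\frac{\beta_0}{2}\left[\,2(\beta_0+\theta)+\frac{(\beta_{\rm w}+\theta)^2\bigl(3(\beta_0+\theta)-(\beta_{\rm w}+\theta)\bigr)}{(\beta_0+\theta)^2}\right],
\end{equation}
which is negative, for instance, at $\beta_0=0.1$, $\beta_{\rm w}=0.9$, $\theta=0.1$ (the bracket evaluates to $0.4-10$). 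So the claim $A'(\theta)>0$ cannot be used, and any completion of your plan must survive the regime where $A$ itself is decreasing in $\theta$. Since $\beta_0=p_{\rm I}$ is typically small and $\beta_{\rm w}$ close to $1$ in the energy-saving scenarios the paper targets, this is not a pathological corner case.

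For comparison, the paper's own proof is a much cruder order-of-magnitude argument: it asserts $A\sim\Theta(\theta^3)$, $B\sim\Theta(\theta^2)$, $C\sim c$ and concludes that the roots $\frac{-B\pm\sqrt{B^2-4AC}}{2A}$ decrease in $\theta$ (in fact the correct orders from (\ref{eq_quadratik}) are $A\sim\Theta(\theta^2)$, $B\sim\Theta(\theta)$, $C\sim c$, which still gives roots of order $1/\theta$). That argument only establishes asymptotic decay, not monotonicity for all $\theta$, so your IFT framework is genuinely more ambitious and would yield a stronger, pointwise statement if completed. To salvage the proposal you must actually carry out the sign verification of $v^*\bigl(B'-\tfrac{A'}{A}B\bigr)+\bigl(C'-\tfrac{A'}{A}C\bigr)<0$ (or find a cleaner certificate, e.g.\ bounding $v^*$ by $-B/(2A)$ and by the explicit root formula simultaneously); until that polynomial inequality in $(\theta,\beta_0,\beta_{\rm w})$ is exhibited and checked, the lemma remains unproved by your route.
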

\begin{proof}
We have $\gamma \sim \theta^2$. From (\ref{eq_quadratik}), it is clear $A \sim \Theta( \theta^3 ) $, $B \sim \Theta( \theta^2 )$, $C \sim c $, where $c$ is a constant. Then, the roots $\frac{-B\pm \sqrt{B^2-4AC}}{2A}$ are decreasing with respect to $\theta$.
\phantom\qedhere
\end{proof}
\label{lm_roots_decrease}

\subsection{Optimization of Bit-rate}
Optimization of bitrate with respect to threshold distance is not easy to derive analytically. Therefore, we developed an efficient numerical algorithm that computes the optimal threshold distance according to the bitrate.

Algorithm initializes with an upper and a lower bound $R_{\rm u}$, $R_{\rm l}$ respectively. Simply, we choose $R_{\rm l} =0$. Derivation of upper bound can be done based on the threshold distance maximizing the coverage probability. We observe from (\ref{eq_bitrate_2}) that maximizing the coverage probability for given SINR value is strongly related to maximizing bitrate. Then, without loss of generality we can rewrite (\ref{eq_bitrate_2}) in a discrete form as  $\lim_{\Delta \gamma \rightarrow 0} \sum_{i=1}^{\infty} \frac{ \mathbb{P}( \text{SINR}>\gamma_{i} )}{1+\gamma_{i}}\Delta\gamma_{i}$. Let $R_{i}^{*}$ be the optimal threshold distance for $\mathbb{P}( \text{SINR}>\gamma_{i} )$. Then, by choosing optimal thresholds $R_{i}^{*}$,  sum of the bit rates corresponding to each $\gamma_{i}$ is maximized. By lemma \ref{lm_roots_decrease}, we have $R_{i}^{*} > R_{j}^{*}$, where $i< j$. Finally by choosing arbitrarily small $\gamma$ value, an upper bound can be defined. After we define $R_{\rm l}$, $R_{\rm u}$, we can easily find optimal threshold distance by following Algorithm~1.

\begin{algorithm}[!h]
\caption{Algorithm to find optimal $R_{\rm th}$ for bitrate.}
\label{alg_optimizing_bitrate}
\begin{algorithmic}
\State Initialize parameters $\varepsilon$, $R_{\rm l}=0$, $R_{\rm u}$, $\Delta R =R_{u}$
\While {$\Delta R > \varepsilon$}
\State Set $R = (R_{l} + R_{u})/2$ and find search direction by computing bitrates $r_{R + \varepsilon}$, $r_{R - \varepsilon}$
\If {$r_{R + \varepsilon}>$ $r_{R - \varepsilon}$  }
    \State $R_{\rm l} \gets R$
\Else
    \State $R_{\rm u} \gets R$
\EndIf
\State $\Delta R \gets |R- (R_{u}+R_{l})/2|$
\EndWhile
\end{algorithmic}
\end{algorithm}

Note that Algorithm 1 has a precision parameter $\varepsilon$. At each iteration, candidate threshold distance is chosen as the average value of upper and lower bounds of $R_{\rm th}$. Also, search direction is determined by checking the bit-rates in $\varepsilon-$neighborhood of $R$. The algorithm eventually stops  and converges after $\Theta(\log(\frac{R_{u}}{\varepsilon} ))$ steps.

\section{Energy Efficiency Analysis}
In this part, we consider the trade-off between energy-efficiency and delay.
Energy-efficiency is measured as the amount of transmitted bits per unit time
per unit bandwidth per Watt (i.e. bits/s/Hz/Watt). Then energy-efficiency is
\begin{align}
\nu(\rho_{\rm f},\beta_{w}) =\frac{ \tau (\rho_{\rm f},\beta_{\rm w})  }{\text{B $\times$ Mean power consumption of a cell} },\nonumber
\end{align}
where $B$ is the allocated bandwidth. Without loss of generality, we assume
a user uses all available bandwidth during transmission. In order to compare energy efficiency of a cell
with and without delay, we define normalized energy efficiency as
\begin{align}
\nu_{N} = \frac{ \nu(\rho_{\rm f},\beta_{w})}{\nu(\rho_{\rm f},\beta_{0})},
\end{align}
which measures relative scale of improvement in the energy-efficiency with respect to the condition
with no access delay (i.e. $t = 0$).
%

\section{Simulation and Discussions}
In this part, we verify the accuracy of our analysis by Monte-Carlo simulations.
We generate uniformly distributed random variables for X- and Y- coordinates of small cells in a large circle. Cells are independently marked as either idle, active or sleeping with probabilities $p_{\rm I}$, $p_{\rm A}$ and $p_{\rm S}$ respectively. Simulation parameters are listed in Table \ref{table_Sim_params}.
\begin{table}[h]
\centering
\caption{\mbox{Simulation parameters.} }
\begin{IEEEeqnarraybox}[\IEEEeqnarraystrutmode\IEEEeqnarraystrutsizeadd{2pt}{1pt}]{v/c/v/c/v}
\IEEEeqnarrayrulerow\\
&\mbox{ Parameter}&& \mbox{Value} &\\
\IEEEeqnarrayrulerow\\
& \mbox{Small Cell density $ (\rho_{\rm f}$) } && 0.005 \mbox{ per m$^2$} \\
\IEEEeqnarrayrulerow\\
& \mbox{Tolerable delay $ (w_{\rm t})$ } && 10 \mbox{ sec$^{-1}$}\\
\IEEEeqnarrayrulerow\\
& \mbox{Small Cell Transmission Power $(P_{\rm tx})$} && \mbox{23 dBm} \\
\IEEEeqnarrayrulerow\\
& \mbox{Thermal Noise Power $(\sigma^2)$} && \mbox{-104 dBm} \\
\IEEEeqnarrayrulerow\\
& \mbox{Path loss exponent $(\alpha)$} && \mbox{4} \\
\IEEEeqnarrayrulerow\\
& \mbox{Average sleep/active time $(\lambda_{\rm S})$} && \mbox{10 sec}\\
\IEEEeqnarrayrulerow\\
& \mbox{Average sleep/active time $(\mu)$} && \mbox{10 sec}
\\
\IEEEeqnarrayrulerow
\end{IEEEeqnarraybox}
\label{table_Sim_params}
\end{table}

In Figure \ref{fig_pr_cov_thresholdSINR}, coverage probability derived in (\ref{eq_pr_cov_SPalpha4N0}) is verified via simulations. We observe that choosing a proper threshold distance with respect to SINR threshold has significant importance. Especially at low SINR levels, the selection of threshold distance becomes crucial. We observe that coverage probability at 0 dB SINR is almost doubled by choosing proper threshold distance. Especially for delay tolerant data applications, and for intermittent connections, delayed access mechanisms can sustain high coverage probability and improves energy efficiency.
\begin{figure}[h]
\centering
\includegraphics[width=0.5\textwidth]{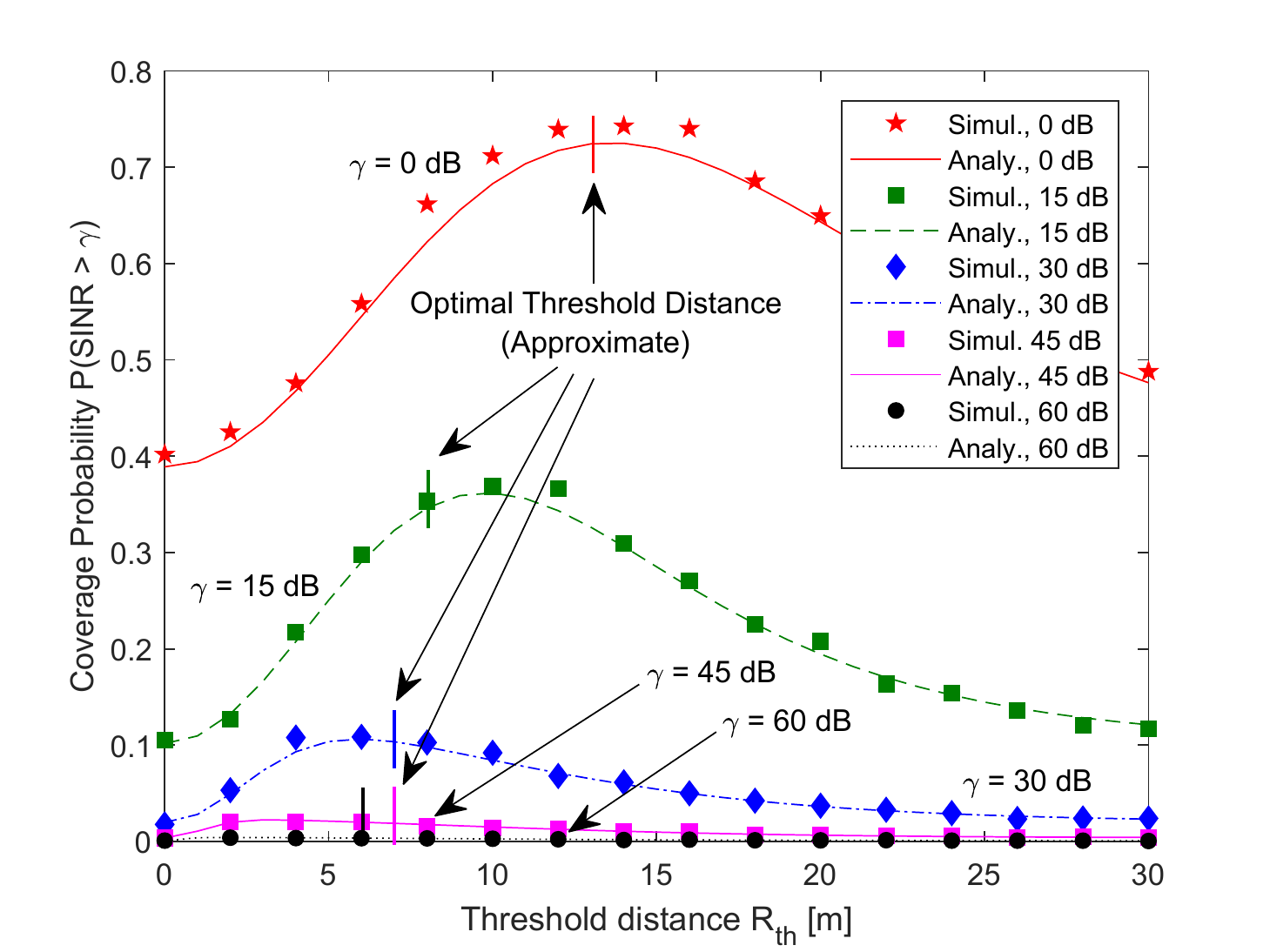}

\caption{Coverage probability at different SINR thresholds with no noise. Threshold distance with respect to given threshold SINR is close to optimum. The small gap between approximation and optimal value is due to Taylor approximation. Tolerable delay $w_{\rm t} = 1/\lambda_{\rm S}$.}
\label{fig_pr_cov_thresholdSINR}
\end{figure}

Figure \ref{fig_bitrate_vs_Rth} shows capacity increase with respect to threshold distance. Tolerable delay is normalized
by the sleeping time. Depending on the fraction of sleep and idle cells, and the improvement in capacity is almost tripled. Capacity gain depends on the transmit power gain. In fact, it is shown in \cite{HalukRandomOnOff} that  transmit power gain is a function of the idle cell density. If the idle cell density is small with respect to active and sleep cell densities, potential transmit power gain is large. In theory, idle cell density can be arbitrarily small compared to active and sleep cell density.

\begin{figure}[h]
\centering
\includegraphics[width=0.5\textwidth]{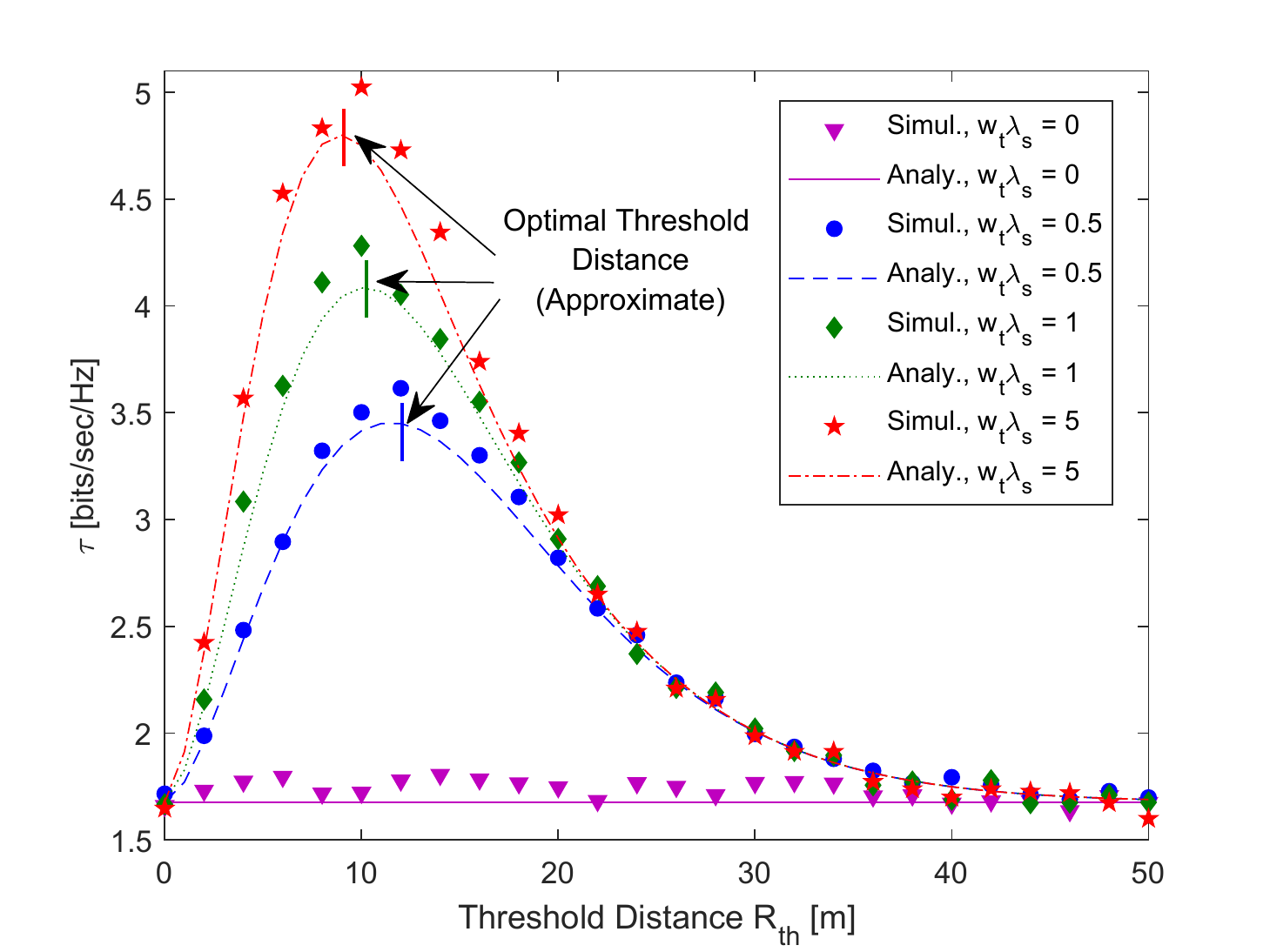}
\caption{Transmission rate for varying threshold distances and delay. Small vertical lines show optimal threshold distance computed by our algorithm.}
\label{fig_bitrate_vs_Rth}
\end{figure}

In Figure \ref{fig_EE_vs_beta_w},  results show the scale of improvement in the network energy efficiency in the proposed scheme with respect to delay starting from zero (i.e. $\beta_0 = 0.1$, delay intolerant case). As the tolerable delay increases proportion of cells that become available, either in time or immediately, $\beta_{w}$, energy-efficiency of network increases.  At high or moderate traffic it is detrimental to turn off cells due to the possibility of degrading quality of service. Because of this, we evaluated network energy efficiency only at low traffic utilization (i.e., $p_{\rm A} =0.1$). For simulation, we kept density of active cells small and fixed and change the density of sleeping and idle cells.  We observe that as the density of sleeping cells with respect to the density of idle cells increase, energy-efficiency of the small cell network increase up to 3-fold.

\begin{figure}[h]
\centering
\includegraphics[width=0.5\textwidth]{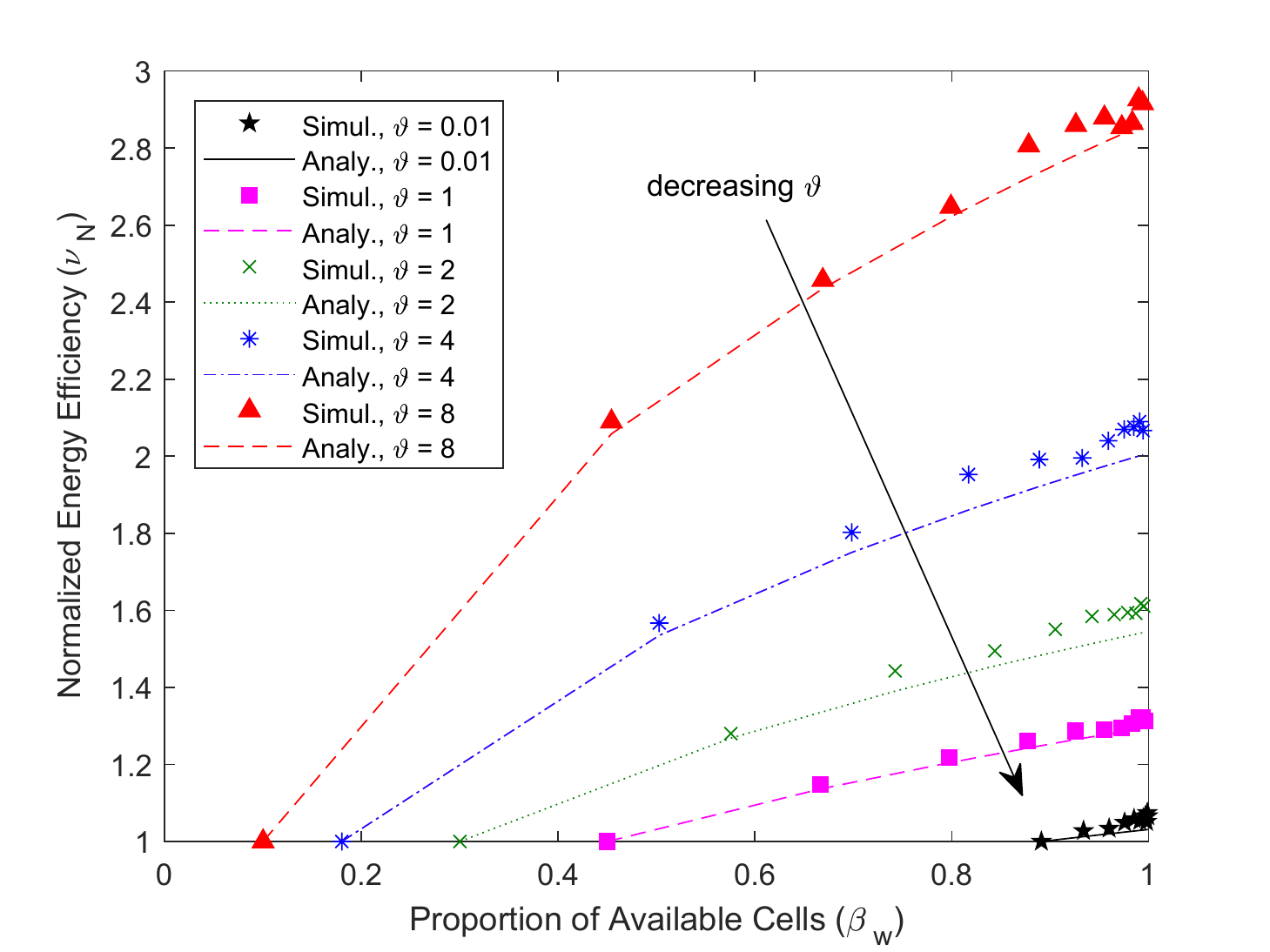}
\caption{Normalized energy efficiency with respect to cell availability along with $\vartheta \,{=}\, p_S/p_I \,{\in}\, \{0.01,1,2,4,8\}$ and $p_{\rm A} \,{=}\, 0.1$.} 
\label{fig_EE_vs_beta_w}
\end{figure}

\section{Conclusion}

In this paper, we propose delayed access mechanism to
improve energy efficiency, coverage probability, and average bit-rate of small cells, and verify its performance
via simulations. For the proposed access mechanism, optimal
threshold distance maximizing the coverage probability is derived.
An efficient numerical algorithm that optimizes threshold distance with respect to the delay budget is developed. Energy-efficiency of the small cell network is assessed by comparing UE's connection to the cell with and without delay. Results show that delayed access strategy can be utilized for  data applications that can tolerate an initial access delay. By delaying UE's access, energy-efficiency of the network can be increased by 3-fold at low traffic utilization.  Some of our future work include designing on/off schemes with
deterministic sleep times, and development of energy efficient
protocols for small cell networks.

\appendix
\begin{proof}
It can be easily seen that (\ref{eq_Ltransform_SPcase}) and (\ref{eq_pr_cov_dist_cond}) becomes equal when $\sigma^2 =0$. Then, by inserting (\ref{eq_pr_state_transition_IA})-(\ref{eq_pr_state_transition_OA}), and (\ref{eq_Ltransform_SPcase}) into (\ref{eq_pr_SINR_pwise}), we have:
\begin{align}
    p_{c} & = \beta_0 \int_{0}^{v}e^{-u(\beta_{0}+\theta)}\text{d}u  \nonumber \\
    &+\frac{1}{v} \int_{0}^{v} e^{-\theta u}\text{d}u\left( e^{-\beta_{0}v} -e^{-\beta_{w}v} \right)\nonumber \\
    &+ \frac{\beta_{0} e^{-\beta_{w}v}}{e^{-\beta_{0}v}}\int_{v}^{\infty}e^{-(\theta +\beta_{0})u}\text{d}u,
\end{align}
where $\theta =p_{\rm A} \sqrt \gamma \pi/2 $, and  $v = \rho_{\rm f}\pi R_{\rm th}^2$. After some algebraic manipulations, coverage probability becomes
\begin{align}
p_{\rm c} & = \frac{\beta_{0}}{\theta + \beta_{0}}
\nonumber \\
&+ \left( e^{-\beta_{0}v} - e^{-\beta_{\rm w}v} \right)\left[ \frac{1}{\theta v} -e^{-\theta v}\left\{ \frac{\beta_{0}}{\theta + \beta_{0}} + \frac{1}{\theta v}  \right\} \right],
\label{eq_pr_cov_SPalpha4N0}
\end{align}
\end{proof}

\bibliographystyle{IEEEtran}
\bibliography{references}
\end{document}